\numberwithin{equation}{section}	
\theoremstyle{plain}			
\newtheorem{thm}{Theorem}
\newtheorem{cor}[thm]{Corollary}
\theoremstyle{definition}		
\newtheorem{examp}[thm]{Example}
\newcommand{\vsep}{\hspace{3mm} | \hspace{3mm}}
\newcommand{\Lelr}[1]{\mathcal{L}_{e,l,r}^{#1}}
\newcommand{\Lel}[1]{\mathcal{L}_{e,l}^{#1}}
\newcommand{\Le}[1]{\mathcal{L}_{e}^{#1}}
\renewcommand{\S}{\mathcal{S}}
\newcommand{\R}{\mathcal{R}}
\begin{document}
\title{(Un)Decidability Results for Word Equations with Length and Regular Expression Constraints}

\author{Vijay Ganesh$^\dagger$, Mia Minnes$^*$, Armando Solar-Lezama$^\dagger$ and Martin
 Rinard$^\dagger$}
  \institute{$^\dagger$Massachusetts Institute of Technology\\
  \{vganesh, asolar, rinard\} @csail.mit.edu\\
  $^*$University of California, San Diego \\
  minnes@math.ucsd.edu}

\maketitle
\begin{abstract}
\label{sec:abstract}

\sloppypar{ We prove several decidability and undecidability results
  for the satisfiability and validity problems for 
  languages that can express solutions to word equations with length 
  constraints.
  The atomic formulas over this language are
  equality over string terms (word equations), linear inequality over the
  length function (length constraints), and membership in
  regular sets. These questions are important in
  logic, program analysis, and formal verification.
  Variants of these questions have been studied for
many decades by mathematicians. More recently, practical
satisfiability procedures (aka SMT solvers) for these formulas have
become increasingly important in the context of security analysis for
string-manipulating programs such as web applications.

  We prove three main theorems. First, we give a new proof of
  undecidability for the validity problem for the set of sentences
  written as a $\forall\exists$ quantifier alternation applied to
  positive word equations. A corollary of this undecidability result
  is that this set is undecidable even with sentences with at most two
  occurrences of a string variable. Second, we consider Boolean
  combinations of quantifier-free formulas constructed out of word
  equations and length constraints. We show that if word equations can
  be converted to a {\it solved form}, a form relevant in practice,
  then the satisfiability problem for Boolean combinations of word
  equations and length constraints is decidable. Third, we show that
  the satisfiability problem for quantifier-free formulas over word
  equations in {\it regular solved form}, length constraints, and the
  membership predicate over regular expressions is also decidable.}

\end{abstract}

\section{\bf Introduction} 

The complexity of the satisfiability problem for formulas over finite-length strings
(theories of strings) has long been studied, including by 
Quine~\cite{Quine}, Post, Markov
and Matiyasevich~\cite{Matiyasevich}, Makanin~\cite{makanin}, and
Plandowski~\cite{KarhumakiPM97, plandowski99, plandowski2006}. While
much progress has been made, many questions 
remain open especially when the language is enriched with new predicates.

Formulas over strings have become important in the
context of automated bugfinding~\cite{hampi, prateek}, and analysis of
database/web applications~\cite{emmiMS2007, rupak,
  WassermannSu2007}. These program analysis and bugfinding tools
read string-manipulation programs and generate formulas expressing
their results.  These formulas
contain equations over string constants and variables, membership
queries over regular expressions, and inequalities between string lengths.
In practice, formulas of this form have been solved by
off-the-shelf satisfiability procedures such as HAMPI~\cite{hampi2,
  hampi} or Kaluza~\cite{prateek}. In this context, a deeper
understanding of the theoretical aspects of the satisfiability problem
for this class of formulas may be useful in practice. 

\vspace{0.1cm}
\noindent{\bf Problem Statement:} We address three problems. First, what is
a boundary for decidability for fragments of the theory of word equations?
Namely, is the
$\forall\exists$-fragment of the theory of word equations
decidable? Second, is the satisfiability problem for quantifier-free
formulas over word equations and the length function decidable under
some minimal practical conditions? Third, is the satisfiability
problem for quantifier-free formulas over word equations, the length
function, and regular expressions decidable under some minimal
practical conditions?

The question of whether the
satisfiability problem for the quantifier-free theory of word equations
and length constraints is decidable has remained open for several
decades.  Our decidability results are a partial and
conditional solution. Matiyasevich~\cite{matiyasevich2008} observed the
relevance of this question to a novel resolution of Hilbert's Tenth
Problem.  In particular, he showed that if the satisfiability problem
for the quantifier-free theory of word equations and length
constraints is undecidable, then it gives us a new way to prove
Matiyasevich's Theorem (which resolved the famous problem)~\cite{Matiyasevich, matiyasevich2008}.

\vspace{-0.2cm}
\subsection*{Summary of Contributions:}  

\begin{enumerate}

\item We show that the validity problem (decision problem) for the
  set of sentences written as a $\forall\exists$ quantifier
  alternation applied to positive word equations (i.e., AND-OR
  combination of word equations without any negation) is
  undecidable. (Section~\ref{sec:undecidability})

\item We show that if word equations can be converted to a {\it solved
  form} then the satisfiability problem for Boolean combinations of
  word equations and length constraints is
  decidable. (Section~\ref{sec:decidability})

\item The above-mentioned decidability result has immediate practical
  impact for applications such as bug-finding in JavaScript and PHP
  programs. We empirically studied the word equations in the formulas
  generated by the Kudzu JavaScript bugfinding tool~\cite{prateek} and
  verified that most word equations in such formulas are either
  already in solved form or can be automatically and easily converted
  into one. (Section~\ref{sec:decidability}). We further show that the
  satisfiability problem for quantifier-free formulas constructed out
  of Boolean combinations of word equations in {\it regular solved
    form} with length constraints and the membership predicate for
  regular sets is also decidable. This is the first such decidability
  result for this set of
  formulas. (Section~\ref{sec:elr-decidability})
\end{enumerate}

We now outline the layout of the rest of the paper.
In Section~\ref{sec:prelim} we define
a theory of word equations, length constraints, and regular
expressions. In Section~\ref{sec:undecidability} we prove the
undecidability of the theory of $\forall\exists$ sentences over
positive word equations. In Section~\ref{sec:decidability}
(resp. Section~\ref{sec:elr-decidability}) we give a conditional
decidability result for the satisfiability problem for the
quantifier-free theory of word equations and length constraints
(resp. word equations, length constraints, and regular
expressions). Finally, in Section~\ref{sec:relwork} we provide a
comprehensive overview of the decidability/undecidability results for
theories of strings over the last several decades.

\section{Preliminaries}
\label{sec:prelim}

\subsection{Syntax}

\noindent{\bf Variables:} We fix a disjoint two-sorted set of
variables $var = var_{str} \cup var_{int}$; $var_{str}$
consists of string variables, denoted $X,Y,S, \ldots$ and $var_{int}$
consists of integer variables, denoted $m,n,\ldots$.

\noindent{\bf Constants:} We also fix a two-sorted set of constants
$Con = Con_{str} \cup Con_{int}$.  Moreover, $Con_{str} \subset
\Sigma^*$ for some finite alphabet, $\Sigma$, whose elements are
denoted $f, g, \ldots$. Elements of $Con_{str}$ will be referred to as
{\it string constants} or {\it strings}. Elements of $Con_{int}$ are
nonnegative integers. The empty string is represented by $\epsilon$.

\noindent{\bf Terms:} Terms may be string terms or
length terms. A string term ($t_{str}$ in Figure~\ref{fig:syntax}) is
either an element of $var_{str}$, an element of $Con_{str}$, or a
concatenation of string terms (denoted by the function $concat$ or
interchangeably by $\cdot$).  A length term ($t_{len}$ in
Figure~\ref{fig:syntax}) is an element of $var_{int}$, an
element of $Con_{int}$, the length function applied to a string
term, a constant integer multiple of a length term, or a sum of
length terms.

\noindent{\bf Atomic Formulas:} There are three types of atomic formulas: 
(1) word equations ($A_{wordeqn}$), 
(2) length constraints ($A_{length}$), or 
(3) membership in a set defined by a 
regular expression ($A_{regexp}$).
Regular expressions are defined inductively, 
where constants and the empty string form the base case, and the operations
of concatenation, alternation, and Kleene star are used to build up 
more complicated expressions
(see details in~\cite{ullmanbook}).
Regular expressions may not contain variables.

\noindent{\bf Formulas:} Formulas are defined inductively 
over atomic formulas (see Figure~\ref{fig:syntax}). We include
quantifiers of two kinds: over  string variables and over integer variables.

\noindent{\bf Formula Nomenclature:} We now establish notation for the
classes of formulas we will analyze.  
Define $\Lelr{1}$ to be the
first-order two-sorted language over which the formulas described
above (Figure~\ref{fig:syntax}) are constructed. This language
contains word equations, length constraints, and membership in 
given regular sets. The superscript $1$ in $\Lelr{1}$ denotes
that this language allows quantifiers, and the subscripts $l,e,r$
stand for ``length", ``equation", and ``regular expressions"
(respectively). Let $\Lel{1}$ be the analogous set of
first-order formulas restricted to word equations and length constraints
as the only atomic formulas,
and let $\Le{1}$ be the collection of formulas whose
only atomic formulas are word equations. 
Define $\Lelr{0}$ to be the set of quantifier-free
$\Lelr{1}$ formulas.  Similarly, $\Lel{0}$ and $\Le{0}$ are the
quantifier-free versions of $\Lel{1}$ and $\Le{1}$,
respectively. 

Recall that a formula is in {\it prenex normal form} if all quantifiers
appear at the front of the expression: that is, the formula has a string of 
quantifiers and then a Boolean combination of atomic formulas.  
It is a standard result (see, for example~\cite{EbFlumThomas})
that any first-order formula can be translated into
prenex normal form.  We therefore assume that all formulas are given in 
this form.  Intuitively, a variable is {\it free} in a
formula if it is not quantified. For example, in the formula $\forall y \phi(y,x)$,
the variable $y$ is {\it bound} while $x$ is {\it
  free}. For a full inductive definition, see~\cite{EbFlumThomas}. A
formula with no free variables is called a {\it sentence}.

%
\begin{figure}[t!]
\[
\begin{array}{lll}
  F & \Coloneqq & Atomic  \hspace{1.2mm} \vsep F \wedge F  \vsep F \vee F  \vsep \neg F\\
  & \vsep & \exists x.F(x)  \vsep \forall x.F(x)\\
  Atomic & \Coloneqq & A_{wordeqn}  \hspace{0.1mm} \vsep A_{length}  \vsep A_{regexp}\\
  A_{wordeqn} & \Coloneqq & t_{str} = t_{str}\\
  A_{length} & \Coloneqq & t_{len} \leq c \hspace{40mm} \text{where } c \in Con_{int}\\
  A_{regexp} & \Coloneqq & t_{str} \in RE \hspace{37mm} \text{ where RE is a regular expression}\\
  t_{str} & \Coloneqq & a \vsep X \vsep concat(t_{str},...,t_{str}) \hspace{9.5mm} \text{where} \hspace{1mm} a \in Con_{str} \hspace{1mm} \& \hspace{1mm} X \in var_{str}\\
  t_{len} & \Coloneqq & m \vsep v \vsep len(t_{str}) \vsep \Sigma_{i=1}^{n}c_i * t^i_{len} \hspace{3mm} \text{where} \hspace{1mm} m, n, c_i \in Con_{int} \hspace{1mm} \& \hspace{1mm} v \in var_{int}\\
\end{array}
\]
\caption{\label{fig:syntax} The syntax of $\Lelr{1}$-formulas.}
\end{figure}

\subsection{Semantics and Definitions}

For a word, $w$, $len(w)$ denotes the length of $w$. For a word
equation of the form $t_1 = t_2$, we refer to $t_1$ as the left hand
side (LHS), and $t_2$ as the right hand side (RHS). 

We fix a string alphabet, $\Sigma$. 
Given an $\Lelr{1}$
formula $\theta$, an {\it assignment} for $\theta$ (with respect to $\Sigma)$ is
a map from the set of free variables in $\theta$ to 
$\Sigma^* \cup \mathbb{N}$ (where string variables are 
mapped to strings and integer variables are mapped to numbers).
Given such an assignment, $\theta$ can be interpreted as an assertion
about $\Sigma^*$ and $\mathbb{N}$.   If this assertion is true, then 
we say that $\theta$ itself is {\it true} under the assignment.  
If there is some assignment which makes $\theta$ true, then
$\theta$ is called {\it satisfiable}.
An $\Lelr{1}$-formula with no satisfying assignment
is called an {\it unsatisfiable} formula. We say two formulas $\theta,
\phi$ are {\it equisatisfiable} if $\theta$ is satisfiable iff $\phi$
is satisfiable.  Note that this is a broad definition: equisatisfiable
formulas may have different numbers of assignments and, in fact, 
need not even be from the same language.

The {\it satisfiability problem} for a set $S$ of formulas is the
problem of deciding whether any given formula in $S$ is satisfiable or
not. We say that the satisfiability problem for a set $S$ of formulas
is decidable if there exists an algorithm (or {\it satisfiability
  procedure}) that solves its satisfiability problem. Satisfiability
procedures must have three properties: soundness, completeness, and
termination. Soundness and completeness guarantee that the procedure
returns ``satisfiable" if and only if the input formula is indeed
satisfiable. Termination means that the procedure halts on all
inputs. In a practical implementation, some of these requirements may be
relaxed for the sake of improved typical performance.

Analogous to the definition of the satisfiability problem for
formulas, we can define the notion of the {\it validity problem} (aka
decision problem) for a set $Q$ of sentences in a language $L$. The
validity problem for a set $Q$ of sentences is the problem of
determining whether a given sentence in $Q$ is true under all assignments. 


\subsection{\bf Representation of Solutions to String Formulas}

It will be useful to have compact representations of sets of solutions
to string formulas.  For this, we use Plandowski's terminology of {\it
  unfixed parts} \cite{plandowski2006}.  Namely, fix a set of new
variables $V$ disjoint from all of $\Sigma$, $Con$, and $var$.  For $\theta$
an $\Lelr{1}$ formula, an {\it assignment with unfixed parts} is a
mapping from the free variables of $\theta$ to string elements of the
domain or $V$.  Such an assignment represents the family of solutions
to $\theta$ where each element of $V$ is consistently replaced by a
string element in the domain.  (See example \ref{ex:unfixed} below.)

Another tool for compactly encoding many solutions to a formula
 is the use of {\it integer parameters}.  If $i$
is a non-negative integer, we write $u^{i}$ to denote the $i$-fold
concatenation of the string $u$ with itself.  An {\it assignment with
  integer parameters} to the formula $\theta$ is a map from the free
variables of $\theta$ to string elements of the domain, perhaps with
integer parameters occurring in the exponents.  (See example \ref{ex:abba} below.)

Combining these two representations, we also consider assignments with unfixed parts and
integer parameters.  These assignments will provide the general
framework for representing solution sets to $\Lelr{1}$ formulas
compactly.

\subsection{\bf Examples}

We consider some sample formulas and their solution sets.  
The string alphabet is $\Sigma = \{a,b\}$.  (Many of
the examples in this paper are from existing literature by Plandowski et
al.~\cite{plandowski2006}.)

\begin{examp}\label{ex:unfixed}
  Consider the $\Le{0}$ formula which is a word equation $X=aYbZa$ with three
  variables ($X,Y,Z$) and two string constants ($a,b$).
  The set of all solutions to this equation is described by the
  assignment $X \mapsto aybza, Y \mapsto y, Z \mapsto z$, where $V =
  \{y,z\}$ is the set of unfixed parts.
  Any choice of $y,z \in \Sigma^*$ yields a solution to the equation.
\end{examp}

\begin{examp}\label{ex:abba}
  Consider the equation $abX = Xba$ with one variable $X$. This is a
  formula in $\Le{0}$. The map $X \mapsto aba$ is a solution.
  The map $X \mapsto (ab)^{i} a$ with $i \geq 0$ is also an assignment 
  which gives a solution.  In fact, this assignment (with integer parameters)
  exactly describes all possible solutions of the word equation. 
\end{examp}

\begin{examp}
  Consider the $\Lelr{0}$ formula 
  $$abX = Xba \wedge X \in
  (ab \mid ba)(ab)^*a \wedge len(X) \leq 5.$$ The two solutions to this
  formula are $X = aba$ and $X = ababa$.
\end{examp}

\section{\bf The Undecidability Theorem}
\label{sec:undecidability}

In this section we prove that the validity problem for the set of
$\Le{1}$ sentences over positive word equations (AND-OR combinations of
word equations) whose prenex normal form has 
$\forall\exists$ as its quantifier prefix is undecidable.

\subsection{\bf Proof Idea}

We do a reduction from the halting problem for two-counter machines,
which is known to be undecidable~\cite{ullmanbook}, to the problem in
question. To do so, we encode computation histories as
strings. The choice of two-counter machine
makes this proof cleaner than other undecidability proofs for this set
of formulas (see Section \ref{sec:relwork} for a comparison with earlier work).
The basic proof strategy is as
follows: given a two-counter machine $M$ and a finite string $w$, we
construct an $\Le{1}$ sentence $\forall S \exists S_1,\ldots, S_4
\theta(S,S_1,\ldots, S_4)$ such that $M$ does not halt on $w$ iff
this $\Le{1}$ sentence is valid.  By the construction of $\theta$, 
this will happen exactly when
all assignments to the string variable $S$ are not codes for halting
computation histories of $M$ over $w$. 
The variables $S_1,\ldots,S_4$
are used to refer to substrings of $S$ and the quantifier-free 
formula~$\theta$ expresses the property of $S$ not coding a 
halting computation history.

\subsection{\bf Recalling Two-counter Machines}

A {\it two-counter machine} is a deterministic machine which has a
finite-state control, two semi-infinite storage tapes, 
and a separate
read-only semi-infinite input tape. All tapes have a
left endpoint and no right endpoint. All tapes
are composed of cells, each of which may store a symbol from the
appropriate alphabet (the alphabet of the storage tapes is $\{Z, \text{blank}\}$;
the alphabet of the input alphabet is some fixed finite set).
 The input to the machine is a finite
string written on the input tape, starting at the leftmost cell. A
special character follows the input string on the tape to mark the end
of the input.
Each tape has a corresponding tape-head that may
move left, move right, or stay put. 
The input tape-head cannot move past the
right end of the input string.
The initial position of all the tape-heads is the leftmost
cell of their respective tapes. At each point in the computation, the 
cell being scanned by each tape-head is called that tape's {\it current cell}.

The symbol $Z$ serves as a {\it bottom of stack} marker on the storage
tapes.  Hence, it appears initially on the cell scanned by the tape
head and may never appear on any other cells. 
A non-negative integer $i$
can be represented on the storage tape by moving the tape head $i$
cells to the right of $Z$. A number stored on the storage tape can be
incremented or decremented by moving the tape-head to the right or to the left. We
can test whether the number stored in one of the storage tapes is zero by checking 
if the contents of the current cell of that tape is $Z$.
But, the equality of two numbers stored on the storage tapes cannot be directly
tested. It is well known that the two-counter machine can simulate an
arbitrary Turing machine. Consequently, the halting problem for
two-counter machines is undecidable~\cite{ullmanbook}.

More formally, a two-counter machine $M$ is a tuple $\langle Q,
\Delta, \{Z, b, c\}, 
\delta, q_0, F \rangle$ where,

\begin{itemize}
\item $Q$ is the finite set of control states of $M$, $q_0 \in Q$ is
  the initial control state, and $F \subseteq Q$ is the set of final
  control states.

\item $\Delta$ is the finite alphabet of the input tape, 
  $\{Z,b\}$ and $\{Z,c\}$ are the storage tape alphabets for the first and second tapes,
  respectively. (The distinct blank symbols for the two tapes are a notational 
  convenience.)

\item $\delta$ is the transition function for the control of
  $M$.  This function maps the domain,
    \mbox{$Q \times \Delta \times \{Z, b\} \times \{Z,c\}$} 
into 
   \mbox{$ Q \times \{in,stor1,stor2\}
    \times \{L,R\}$}.
In words, given a control state and the contents of the current
cell of each tape, the transition function specifies the next state of
the machine, a tape-head (input or one of the storage tapes) to move,
and whether this tape-head moves left ($L$) or right ($R$).
\end{itemize}

\subsection{\bf Instantaneous Description of Two-counter Machines as
  Strings}
\label{sec:ID}
We define {\it instantaneous descriptions} (ID) of two-counter
machines in terms of strings. Informally, the ID of a machine
represents its {\it entire configuration} at any instant in terms of
machine parameters such as the current control state, current
input-tape letter being read by the machine, and current storage-tape
contents.  The set of IDs will be determined both by the machine 
and the given input to the machine.

\noindent{\bf Definition of ID:} An instantaneous description (ID) of
a computation step of
a two-counter machine $M$ running on input $w$ is the
 concatenation of the following components.

\begin{itemize}

\item Current control state of $M$: represented by a character over
  the finite alphabet $Q$.

\item The input $w$ and an encoding of the current input tape
  cell.  The encoding uses string constants to represent the 
  integers between $0$ and $|w|-1$; let $N_i$ denote the string
  constant encoding the number $i$.
\
\item The finite distances of the two storage heads from the symbol
  $Z$, represented as a string of blanks (i.e., in unary
  representation). For convenience, we will use the symbol $b$ to
  denote the blanks on storage tape 1, and $c$ on storage tape 2.
\end{itemize}

Each component of an ID is separated from the others by an appropriate
special character. In what follows, we will suppress
discussion of this separator and we will assume that it is
appropriately located inside each ID. 
A lengthy but technically trivial modification of our reduction 
formula could be used to allow for the case where this separator is missing.

\noindent{\bf Definition of Initial ID:} 
For any two-counter machine $M$ and each input $w$, there is exactly one initial
ID, denoted $Init_{M,w}$.
This ID is the result of concatenating the string representations of the following data:
Initial state $q_0$ of $M$, $w$, $0$, $\epsilon$, $\epsilon$.  The ``$0$" says that the current cell
of the input tape contains the $0$th letter of $w$.  The two
``$\epsilon$"s represent the contents of the 
two storage tape: both are empty at this point.

\noindent{\bf Definition of Final ID:} We use the standard
convention that a two-counter machine halts only after
the storage tapes contain the 
unary representation of the number $0$ and the input tape-head has
moved to the leftmost position of its tape.
The ID of the machine at the end of a computation is therefore the 
concatenation of representations of $q_f, w, 0, \epsilon, \epsilon$, 
where $q_f$ is one of the
finitely many final control states $q_f \in F$ of $M$.  Observe that there are only finitely many
Final IDs.

\subsection{\bf Computation History of a Two-counter Machine as a
  String}

A {\it well-formed computation history} of a two-counter machine $M$
as it processes a given input $w$ is the concatenation of a sequence of IDs
separated by the special symbol $\#$. 
The first ID in the sequence is the initial ID of $M$ on $w$, and 
for each $i$,
$ID_{i+1}$ is the result of
transforming $ID_i$ according to the transition function of $M$. A
well-formed computation history of the machine $M$ on the string $w$
is called {\it accepting} if it is a finite string whose last ID is a
Final ID of $M$ on $w$. The last ID of a string is defined to be the
rightmost substring following a separator $\#$. 
If a finite computation history is not accepting, it is either 
not well-formed or rejecting.

\subsection{\bf Alphabet for String Formulas and The Universe of
  Strings}

Given a two-counter machine $M$ and an input string $w$, we define the
associated finite alphabet 
$$\Sigma_0 = \{\#q_i N_j w
: q_i \in Q, 0  \leq j <|w|\}.$$
This alphabet includes all possible {\it initial segments} of IDs,
not including the data about the contents of the storage tapes.
We also define  $\Sigma_1 = b$ and $\Sigma_2 = c$. We define the alphabet of strings
as $\Sigma \equiv \{\Sigma_0 \cup \Sigma_1 \cup \Sigma_2\}$, and the
universe of strings as $\Sigma^*$.
Thus, each valid ID will be in the regular set $\Sigma_0 \Sigma_1^* \Sigma_2^*$.

\subsection{\bf The Undecidability Theorem}

\begin{thm}
  The validity problem for the set of $\Le{1}$ sentences over positive
  word equations with $\forall\exists$ quantifier alternation is
  undecidable.
\end{thm}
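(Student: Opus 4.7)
The plan is a many-one reduction from the halting problem for deterministic two-counter machines. Given $M$ and input $w$, I will construct a sentence $\Phi_{M,w} \equiv \forall S\,\exists S_1,S_2,S_3,S_4\;\theta(S,S_1,S_2,S_3,S_4)$ in which $\theta$ is a positive (AND/OR) combination of word equations asserting that ``$S$ fails to code an accepting computation history of $M$ on $w$, as witnessed by the substrings $S_1,\dots,S_4$.'' Since $M$ is deterministic, at most one string in $\Sigma^*$ codes an accepting history, and such a string exists iff $M$ halts on $w$; hence $\Phi_{M,w}$ will be valid iff $M$ does not halt on $w$.

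The formula $\theta$ will be a disjunction of three flaw-witnessing clauses: \emph{(a) wrong start} --- $S$ does not begin with $Init_{M,w}$, witnessed either by $S$ being shorter than $|Init_{M,w}|$ (a finite disjunction of equations $S=a_1\cdots a_j$ enumerating short words over $\Sigma$) or by $S = a_1 \cdots a_j \cdot c \cdot S_1$ with $c$ ranging over the alphabet letters distinct from the correct $(j{+}1)$th letter of $Init_{M,w}$; \emph{(b) wrong end} --- $S$ does not end with any Final ID, expressible by a dual finite disjunction since the set of Final IDs is finite; and \emph{(c) bad transition} --- there is a decomposition $S = S_1\cdot S_2 \cdot S_3 \cdot S_4$ in which $S_2$ and $S_3$ are well-formed IDs of shape $\Sigma_0\,b^*c^*$ appearing consecutively, yet $S_3$ is not the $\delta$-successor of $S_2$. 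Clauses (a) and (b) only use $S_1$; clause (c) uses all four existential witnesses.

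The chief obstacle is expressing clause (c) positively, since ``$S_3$ is not the $\delta$-successor of $S_2$'' appears to require negation. I exploit two devices. First, well-formedness of an ID, i.e. membership in $\Sigma_0\,b^*c^*$, is positively expressible as $S_2 = \sigma \cdot T \cdot U$ with $\sigma \in \Sigma_0$ (a finite disjunction) together with the commutation equations $Tb = bT$ and $Uc = cU$, which in a free monoid force $T \in b^*$ and $U \in c^*$. Second, since $\delta$ is deterministic, I enumerate the finitely many transition rules $r$, writing for each a positive formula $\mathrm{Pre}_r(S_2)$ that matches the precondition of $r$ (state symbol, input-position index $N_j$, and whether each storage head reads $Z$ or blank) via concatenation equations, together with a positive formula $\mathrm{Diff}_r(S_2,S_3)$ asserting that $S_3$ differs from the specific successor $\beta_r(S_2)$ dictated by $r$ at some witnessed position. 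Because $\beta_r(S_2)$ is obtained from $S_2$ by a bounded sequence of concatenations, $\mathrm{Diff}_r$ is a finite positive disjunction: split $\beta_r$ and $S_3$ around a mismatched character --- ranging over pairs of distinct letters of $\Sigma$ --- plus the two strict-prefix cases for length mismatch. Clause (c) is then $\bigvee_r [\mathrm{Pre}_r(S_2)\wedge \mathrm{Diff}_r(S_2,S_3)]$. Because the $\mathrm{Pre}_r$ are mutually exclusive on well-formed IDs and $\delta$ is total on reachable configurations, this disjunction fires exactly on the genuinely invalid consecutive-ID pairs; in particular it does not fire on any decomposition of the unique accepting history $S^*$ in the halting case. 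Combining (a)--(c) and universally quantifying $S$ produces the desired $\Phi_{M,w}$, completing the reduction.
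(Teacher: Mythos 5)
Your reduction follows essentially the same route as the paper's: encode computation histories of a two-counter machine as strings over $\Sigma_0\cup\{b,c\}$, write $\theta$ as a disjunction of failure modes (wrong start, wrong end, bad transition), use commutation equations such as $Tb=bT$ to pin the storage segments to $b^*$ and $c^*$, and eliminate the negation in ``not the $\delta$-successor'' by a finite disjunction over rules and mismatch positions. However, there is a genuine gap: you have dropped the paper's remaining failure mode, namely that $S$ is not a well-formed \emph{sequence} of IDs at all (the paper's $\mathrm{NotWellFormedSequence}$ clause, $(S=\epsilon)\vee(S=S_1\cdot c\cdot b\cdot S_4)$). Your clause (c) fires only when $S$ contains \emph{two consecutive well-formed IDs} forming an illegal pair, so a globally malformed string can slip past all three clauses. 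Concretely, take $S=\#q_0N_0w\cdot c\cdot b\cdot \#q_fN_0w$, where $\#q_0N_0w=Init_{M,w}$ and $\#q_fN_0w$ is a Final ID (each a single letter of $\Sigma_0$). This $S$ begins with $Init_{M,w}$, so (a) is silent; it ends with a Final ID, so (b) is silent; and no factorization $S=S_1S_2S_3S_4$ puts a factor of shape $\Sigma_0 b^*c^*$ immediately before a factor beginning with a $\Sigma_0$-letter, because the offending $cb$ blocks every candidate, so (c) is silent. Hence in the non-halting case $\Phi_{M,w}$ is falsified by this $S$ and the reduction fails in that direction. The fix is exactly the paper's extra disjunct asserting that $S$ contains the factor $cb$ (or is empty).

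A secondary point: your claim that clause (c) ``does not fire on any decomposition of the unique accepting history $S^*$'' is not justified as stated. Nothing in your clause forces $S_3$ to be a \emph{maximal} ID: one may choose $S_3$ to be a proper prefix of the true successor ID, pushing its trailing $b$'s and $c$'s into $S_4$, and then the strict-prefix case of $\mathrm{Diff}_r$ fires spuriously on $S^*$, breaking the halting direction. You need to additionally require that $S_4$ is empty or begins with a $\Sigma_0$-letter (the paper's $\neg\mathrm{Next}$ clause glosses over the same issue, but you assert the property explicitly, so it needs to be secured).
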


\begin{proof} 
  \noindent{\bf By Reduction:} We reduce the halting problem for
  two-counter machines to the decision problem in question. Given a
  pair $\langle M, w \rangle$ of a two-counter machine $M$ and an
  arbitrary input $w$ to $M$, we construct an $\Le{1}$-formula
  $\theta_{M,w}(S,S_1,\ldots,S_4, U, V)$ 
  which describes the conditions for $S_1, \ldots, S_4$ to be 
  substrings of $S$ and $S$ to fail to code an accepting computation 
  history of $M$ over $w$.
  Thus,  
  \[
  \forall S \exists S_1,S_2, S_3, S_4, U, V~\left( \theta_{M,w}(S,S_1,\cdots,S_4, U, V) \right)
  \]
  is valid if and only if it is not the case that $M$ halts and accepts on $w$.
For brevity, we write $\theta$
for $\theta_{M,w}$. 

\subsubsection{\bf Structure of $\theta$:} We will define $\theta$ as the disjunction of 
ways in which $S$ could fail to encode an accepting computation history:
either $S$ does not start with the Initial
ID, or $S$ does not end with any of the Final IDs, or $S$ is not a well-formed
sequence of IDs,  or it does not follow the transition function of $M$ over
$w$.

\begin{align*}
\theta =  &(\bigvee_{E \in \text{NotInit}} S = E \cdot S_1) \vee (\bigvee_{E \in \text{NotFinal}} S = S_1 \cdot E) \vee \\
  &\text{NotWellFormedSequence}(S,S_1,\cdots,S_4) \vee \\
  &((S = S_1 \cdot S_2 \cdot S_3 \cdot S_4) \wedge (Ub = bU) \wedge (Vc = cV) \wedge \neg
  \text{Next}(S,S_1,S_2, S_3,S_4,U,V))
\end{align*}
Note that the variables $S_i$ ($i=1, \ldots, 4$) represent substrings of $S$.

\begin{itemize}
\item  {\bf NotInit and NotFinal:} The set $\text{NotInit}$ is a finite set of string constants for strings with length at most that
of the Initial ID $Init_{M,w}$ which are not equal to
$Init_{M,w}$. 
Similarly, $\text{NotFinal}$ is a set of string constants for strings that that are not equal to any of the Final
IDs, but have the same or smaller length. 

\item {\bf NotWellFormedSequence}: This subformula asserts that $S$
  is not a sequence of IDs. 
  Recall that, by definition, the set of well-formed IDs is described by the regular expression
  $\Sigma_0\Sigma_1^*\Sigma_2^* = \Sigma_0 b^* c^*$, where strings in $\Sigma_0$ 
  (as defined above) include the ID separator $\#$
  as well as codes for the control state, $w$, and letter of $w$ being scanned.
  A well-formed sequence of IDs is a string
  of the form $(\Sigma_0b^*c^*)^* - \epsilon$.
  Thus, the set described by {\bf NotWellFormedSequence} should be
  $\Sigma^* - (\Sigma_0 b^* c^*)^*$.
  In fact, we can characterize this regular set entirely in terms 
  of word equations: a string over $\Sigma = \Sigma_0 \cup \{b,c\}$ is not a well-formed sequence of IDs
  if and only if it starts with $b$
  or  $c$, or contains $cb$. 
  The fact that a non well-formed sequence
  may start with $b$ or $c$ is already captured by the
  $\text{NotInit}$ formula above. The fact that a non well-formed
  sequence contains $cb$ or is an $\epsilon$ is guaranteed by the
  following formula NotWellFormedSequence():
    $$(S = \epsilon) \vee (S = S_1 \cdot c \cdot b \cdot S_4).$$

\item{\bf Next}:

  $Next()$ asserts that the pair of variables $S_2,S_3$ form a legal
  transition.
  It is a disjunction over all (finitely many)
  possible  pairs of IDs defined by the transition function:
  \begin{align*}
    &\bigvee_{(q_2, d, g_1, g_2, q_3, t, m)\in \delta; 0 \leq n_2, n_3< |w|} S_2 = \#q_2N_{n_2}wUV \wedge S_3 = \#q_3N_{n_3}wf(U)g(V)
  \end{align*}
where $d = w(n_2)$;
$g_1 = Z$ if $U = \epsilon$ and $g_1=b$ otherwise; 
$g_2 = Z$ if $V = \epsilon$ and $g_2 = c$ otherwise;
and $f(U), g(V), N_{n_3}$ are the results of modifying the stack contents represented by $U,V$ and input tape-head
position
according to whether the value of $t$ is $in, stor1, $ or $stor2$ and whether $m$ is $L$ or $R$.
 Note that the disjunction is finite and is determined by the transition function and $w$.
 Also note that each of $\# q_2 N_{n_2} w$ and $\#q_3 N_{n_3} w$ is a single letter in $\Sigma_0$.
  \end{itemize}

\noindent{\bf Simplifying the formula:} 
The formula $\theta$ contains negated equalities in the subformula 
$\neg Next$.  However, 
each of these may be replaced by a disjunction of equalities
because $Q, |w|, \delta$ are each finite.
Hence, we can translate $\theta$ to a formula containing only conjunctions and
disjunctions of
positive word equations. 
We also observe that the formula we constructed in
the proof can be easily converted to a formula which has at most two
occurrences of any variable~\footnote{We thank Professor
  Rupak Majumdar for observing this and other improvements.}. Thus, we
get the final theorem.
\end{proof}

\begin{thm}
\label{thm:final}
The validity problem for the set of $\Le{1}$ sentences with
$\forall\exists$ quantifier alternation over positive word equations,
and with at most two occurrences of any variable, is undecidable.
\end{thm}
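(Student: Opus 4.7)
The plan is to show that the $\forall\exists$-sentence constructed in the proof of the previous theorem can be recast so that each variable occurs at most twice, without changing the quantifier prefix $\forall S\,\exists \vec{T}$ or the positive-word-equation form of the matrix. Thus the reduction from the halting problem for two-counter machines lands inside the more restricted class.

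The strategy combines three ingredients. First, I would encode the machine alphabet over a binary alphabet $\Sigma = \{a,b\}$; a standard encoding lets us pass through from the original finite alphabet with only a constant blow-up. The binary alphabet is important because the positive statement ``two single-letter string variables $c$ and $c'$ disagree'' can then be written as $(c = a \wedge c' = b) \vee (c = b \wedge c' = a)$, in which $c$ and $c'$ each occur exactly twice. Coupled with a ``first point of difference'' decomposition $S = X \cdot c \cdot Z$ together with $\mathrm{Init} = X \cdot c' \cdot W$, each structural failure condition in $\theta_{M,w}$ (not starting with $\mathrm{Init}$, not ending with a Final ID, not being a well-formed sequence, or violating the transition function at some pair of adjacent IDs) becomes expressible with at most two occurrences of each of the variables appearing inside it.

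Second, I would apply the distributive law $\exists X(\phi_1(X) \vee \phi_2(X)) \equiv (\exists X_1\,\phi_1(X_1)) \vee (\exists X_2\,\phi_2(X_2))$ systematically to every existentially quantified variable of the matrix. After renaming, each such variable is confined to a single disjunct, so its global occurrence count coincides with its use within that disjunct, which by the design above is at most two. Pulling the resulting existentials back to the front keeps the sentence in the $\forall\exists$ fragment.

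The main obstacle is the universally quantified variable $S$, which cannot be distributed across $\vee$. I would handle this by performing a single outer decomposition $S = P_0 \cdot P_1 \cdot P_2$ at the top level of the matrix, splitting $S$ into the initial ID, the body, and the tail of the candidate computation history; $S$ then appears only in this one equation. All subsequent failure conditions are stated in terms of $P_0, P_1, P_2$, whose occurrences are themselves bounded by partitioning the failure modes so that each $P_i$ is referenced in only one disjunct of the outer alternative, and then recursively applying the binary-alphabet trick and existential distribution to the sub-formulas. Assembling these transformations yields a sentence in the desired restricted fragment whose validity continues to encode the non-halting of $M$ on $w$, completing the reduction.
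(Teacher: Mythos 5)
The paper itself offers essentially no proof of this theorem --- it is stated as an observation (credited to Rupak Majumdar) that the formula $\theta_{M,w}$ from the preceding reduction ``can be easily converted'' to one with at most two occurrences of each variable --- so there is no detailed argument to compare yours against. You correctly identify the two standard tools (positive first-point-of-difference gadgets over a binary alphabet, and distributing $\exists$ over $\vee$ with renaming so that each existential witness lives in a single disjunct), and you correctly identify the universally quantified $S$ as the main obstacle. But the step you propose to overcome that obstacle does not close the gap. The matrix of $\theta_{M,w}$ is a disjunction of many failure modes, each of which constrains $S$; since $\forall$ does not distribute over $\vee$, once you distribute the existentials and rename, every top-level disjunct that still mentions $S$ contributes its own occurrence of $S$, and there are far more than two such disjuncts (the NotInit and NotFinal cases alone are disjunctions over many constants, and the transition cases range over $\delta$ and positions in $w$). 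Your proposed fix, a single decomposition $S = P_0\cdot P_1\cdot P_2$, does not repair this: that equation already consumes one of the two allowed occurrences of each $P_i$, so each $P_i$ may be referenced only once more in the entire matrix, which is not enough for four failure modes; moreover the decomposition equation is a conjunct sitting outside the disjunction, so after you distribute the existentials it must be replicated into each disjunct, reintroducing one occurrence of $S$ per disjunct. Finally, the well-formedness and transition-violation conditions need an existentially \emph{chosen} split point inside $S$, not the fixed initial-ID/body/tail split you describe.

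A second, independent problem is the subformula $\neg\mathrm{Next}$. Negating the disjunction over transition rules yields a \emph{conjunction}, each conjunct of which mentions $S_2$, $S_3$, $U$, $V$; existential distribution over $\vee$ does nothing for conjunctions, so these variables acquire on the order of $|\delta|\cdot|w|$ occurrences. One can plausibly repair this by exploiting determinism --- for each ID shape there is a unique $\delta$-successor, so ``the pair $(S_2,S_3)$ is a bad transition'' can be rewritten as a disjunction over ID shapes of ``$S_2$ has this shape and $S_3$ differs from the unique successor'' --- but that rewriting, and a genuine mechanism for keeping the count of $S$ itself at two, are exactly the missing ideas. As it stands your argument establishes undecidability for $\forall\exists$ positive sentences with a \emph{bounded} number of occurrences per variable (bounded in terms of $M$ and $w$), not with at most two.
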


\noindent{\bf Bounding the Inner Existential Quantifiers:} Observe
that in $\theta$ all the inner quantifiers $S_1,\cdots,S_4, U, V$ are
bounded since they are substrings of $S$. The length 
function, $len(S_i) \leq len(S)$, can be used to bound these quantifiers.

\begin{cor}
The set of $\Lel{1}$ sentences with a
single universal quantifier followed by a block of inner bounded
existential quantifiers is undecidable.
\end{cor}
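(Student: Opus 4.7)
The plan is to derive this corollary directly from Theorem~\ref{thm:final} by observing that the existential witnesses constructed in the proof of that theorem are all substrings of the universally quantified variable $S$, so their lengths are automatically bounded by $len(S)$. The only change needed is to move from the language $\Le{1}$ (word equations only) to $\Lel{1}$ (word equations together with length constraints), which is strictly more expressive, and to relativize each inner existential to an explicit length bound.

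Concretely, I would revisit the formula $\theta_{M,w}(S, S_1, \ldots, S_4, U, V)$ built in the proof of Theorem~\ref{thm:final}. Every existentially quantified variable appears in $\theta_{M,w}$ only as a factor of $S$: the $S_i$ occur in equations of the form $S = S_1 \cdot S_2 \cdot S_3 \cdot S_4$, $S = E \cdot S_1$, $S = S_1 \cdot E$, and $S = S_1 \cdot c \cdot b \cdot S_4$, and the variables $U, V$ appear in the $\mathrm{Next}$ subformula as factors of $S_2$ (via $S_2 = \#q_2 N_{n_2} w U V$), which is itself a factor of $S$. Hence whenever some witness assignment to the inner variables makes $\theta_{M,w}$ true, there is one in which each witness is a substring of the value assigned to $S$ and therefore has length at most $len(S)$. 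I would then form the $\Lel{1}$ sentence
\[
\forall S \; \exists S_1, \ldots, S_4, U, V \; \bigl( \theta_{M,w} \wedge \bigwedge_{X \in \{S_1, \ldots, S_4, U, V\}} len(X) \leq len(S) \bigr),
\]
whose inner existentials are now all bounded. Both directions of the original reduction are preserved: if $S$ fails to code an accepting history of $M$ on $w$, the substring witnesses from the proof of Theorem~\ref{thm:final} still work and automatically satisfy the new length constraints; if $S$ does code an accepting history, then $\theta_{M,w}$ is false under every assignment, bounded or not. Consequently this sentence is valid exactly when $M$ does not halt on $w$.

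The main thing to be careful about, rather than a genuine obstacle, is the book-keeping: checking that in every disjunct of $\theta_{M,w}$ each existentially quantified variable really does appear as a factor of $S$ (especially in the clause containing $Ub = bU$ and $Vc = cV$, and in each case of the $\mathrm{Next}$ disjunction). One minor subtlety is that the atomic form $len(X) \leq len(S)$ is not literally an instance of $t_{len} \leq c$ in Figure~\ref{fig:syntax}; but it is equivalent to $len(X) - len(S) \leq 0$, which fits the grammar for $t_{len}$, so it is legal in $\Lel{1}$. Once this is confirmed, the corollary is immediate from Theorem~\ref{thm:final}.
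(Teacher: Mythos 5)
Your proposal is correct and follows essentially the same route as the paper: the paper likewise observes that all inner existential witnesses $S_1,\ldots,S_4,U,V$ are substrings of $S$, so the bounds $len(S_i)\leq len(S)$ can be added without disturbing either direction of the reduction from Theorem~\ref{thm:final}. Your additional care about expressing $len(X)\leq len(S)$ within the grammar of Figure~\ref{fig:syntax} is a reasonable (and correct) detail the paper glosses over.
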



\section{\bf Decidability Theorem}
\label{sec:decidability}

In this section we demonstrate the existence of an algorithm deciding
whether any $\Lel{0}$ formula has a satisfying assignment, under a
minimal and practical condition.

\subsection{\bf Word Equations and Length Constraints}

Word equations by themselves are
decidable~\cite{plandowski2006}.
Also, systems of inequalities
over integer variables are decidable
because these are expressible
as quantifier-free formulas in the language of Presburger arithmetic
and Presburger arithmetic is known
to be decidable~\cite{PRES27}. In this section, we show that if word
equations can be converted into {\em solved form}, the satisfiability
problem for quantifier-free formulas over word equations and length
constraints (i.e., $\Lel{0}$ formulas) is decidable. Furthermore, we
describe our observations of word equations in formulas generated by
the Kudzu JavaScript bugfinding tool~\cite{prateek}.  
In particular, we saw that these equations
either already appeared in solved form or could be
algorithmically converted into one.

\subsection{What is Hard about Deciding Word Equations and Length
  Constraints?}

The crux of the difficulty in establishing an unconditional
decidability result is that
it is not known whether the length constraints implied by a set of word
equations have a finite representation~\cite{plandowski2006}. 
In the case when the implied constraints do have a finite representation, 
we look for a satisfying assignment to both the implied and explicit constraints.
Such a solution can be translated into a satisfying assignment of the
word equations when the implied constraints of the system of equations
is equisatisfiable with the system itself.

\subsection{\bf Definition of Solved Form}

A word equation $w$ has a {\it solved form} if there is a finite set
$\S$ of formulas (possibly with integer parameters) that is logically
equivalent to $w$ and satisfies the following
conditions.\footnote{The idea of solved form is well known in
  equational reasoning, theorem proving, and satisfiability procedures
  for rich logics (aka SMT solvers).}

\begin{itemize}
\item Every formula in $\S$ is of the form $X = t$, where $X$ is a
  variable occurring in $w$ and $t$ is the result of finitely many
  concatenations of constants in $w$ (with possible integer
  parameters) and possible unfixed parts.  
  (Recall the definitions for integer parameters and unfixed parts
  from Section \ref{sec:prelim}.)  All integer parameters $i$ in $\S$
  are linear, of the form $ci$ where $c$ is an integer
  constant.

\item Every variable in $w$ occurs exactly once on the LHS of an
  equation in $\S$ and never on the RHS of an equation in $\S$.
\end{itemize}

The solved form corresponding to $w$ is the conjunction of
all the formulas in $\S$, denoted $\wedge \S$. If there is an algorithm
which converts any given word equation to solved form (if one
exists, and halts in finite time otherwise), and if 
$\wedge \S$ is the output of this algorithm when given
$w$, we say that the {\it effective solved form} of $w$ is $\wedge
\S$.
Solved form equations can have integer parameters,
whereas $\Lel{0}$ formulas cannot.
The solved
form is used to extract all necessary and sufficient length
information {\it implied by $w$}. 

\begin{examp} {\bf Satisfiable Solved Form Example:} Consider the
  system of word equations
\[
Xa = aY \wedge Ya = Xa.
\]
This formula can be converted into solved form as follows:
\[
X = a^i \wedge Y = a^i \qquad (i\geq0).
\]
\end{examp}

\begin{examp} {\bf Unsatisfiable Solved Form Example:} Consider the
  formula $$abX = Xba \wedge X = abY \wedge len(X) < 2$$ with
  variables $X,Y$. The set of solutions to the equation $abX = Xba$ is
  described by the map $X \mapsto (ab)^{i} a$ with $i \geq 0$ (recall Example \ref{ex:abba}). Hence
  the solved form for the system of two equations is:
  \[
  X = (ab)^{i}a \wedge Y = (ab)^{i-1} a \qquad (i > 0)
  \]
  The length constraints implied by this system are
  \[
  len(X) = 2c+1 \wedge len(Y) = 2c-1 \wedge len(X) <2 \qquad \qquad (c >0).
  \]
  This is unsatisfiable.  Hence, the original formula is also
 unsatisfiable.
\end{examp}

\begin{examp} {\bf Word Equations Without a Solved Form:}
  Not all word equations can be written in solved form. Consider the
  equation $$XabY = YbaX.$$  The map $X \mapsto a, Y \mapsto aa$ 
  is a solution, as is $X \mapsto bb, Y \mapsto b$.  However, it is known that the solutions to this
  equation cannot be expressed using linear integer
  parameters~\cite{plandowski2006}.  Thus, not all satisfiable systems of 
  equations can be expressed in solved form.
\end{examp}

\subsection{\bf Why Solved Form?} 

For word equations with an equivalent solved form, 
all length
information implied by the word equations can be represented in a
finite and {\it complete} (defined below) manner. The completeness
property enables a satisfiability procedure to decouple the word
equations from the (implied and given) length constraints, because it
guarantees that the word equation is equisatisfiable with the implied
length constraints. Furthermore, solved form guarantees that the
implied length constraints are linear inequalities, and hence their
satisfiability problem is decidable~\cite{PRES27}. This insight forms
the basis of our decidability results. It is noteworthy that most word
equations that we have encountered in practice~\cite{prateek} are
either in solved form or can be automatically converted into one.

\subsection{\bf Proof Idea for Decidability}

Without loss of generality, we consider formulas that are the conjunction of word
equations and length constraints.  (The result can be easily extended to
arbitrary Boolean combination of such formulas.) Let $\phi \wedge
\theta$ be an $\Lel{0}$-formula, where $\phi$ is a conjunction of word
equations and $\theta$ is a conjunction of length constraints.
Observe that $\phi$ implies a certain set of length constraints.

\begin{examp}
Consider the equation $X = abY$.
We have the following set $\R$ of implied length constraints: 
$$\{len(X) = 2 + len(Y), len(Y) \geq 0\}.$$
The set $\R$ is finite but exhaustive. 
That is, any other length constraint implied by the equation
$X=abY$ is either in $\R$ or is implied by $\R$ .
Consider the $\Lel{0}$ formula
$$X = abY \wedge len(Y) >1,$$
Note that $X=abY$ is satisfiable, say by the assignment with 
unfixed parts $X \mapsto aby, Y \mapsto y$.  It remains to check whether there is a solution
(represented by some choice of the unfixed part) which satisfies the length constraints
$\R \cup \{len(Y) > 1\}$.  A solution to the set of integer inequalities is $len(X) = 4, len(Y) = 2$.
Translating this to a solution of the original formulas amount to ``back-solving" for 
the exponent of unfixed parts in the solution to the word equation.
That is, since $X \mapsto aby, Y\mapsto y$ is a satisfying assignment, we can 
pick any  string of length $2$ for $y$: say, $X \mapsto abab, Y \mapsto ab$. 

Taking this example further, consider the $\Lel{0}$ formula
$$X = abY \wedge len(Y) >1 \wedge len(X) \leq 2.$$
The set of length constraints is now: 
$\{len(X) = 2 + len(Y), len(Y) \geq 0, len(Y) > 1, len(X) \leq 2\}$.
This is not satisfiable, so neither is the original formula.
\end{examp}

The set of implied length constraints for word equations that
have a solved form is also finite and exhaustive. We prove this fact
below, and use it to prove that a sound, complete and
terminating satisfiability procedure exists for $\Lel{0}$ formulas
with word equations in solved form.

\noindent{\bf Definitions:} We say that a set $\R$ of length
constraints is {\em implied by a word equation} $\phi$ if the lengths
of the strings in any solution of $\phi$ satisfy all constraints in
$\R$.  
And, $\R$  is {\it complete} for $\phi$ if any length constraint implied
by $\phi$ is either in $\R$ or is implied by a subset of $\R$. These
definitions can be suitably extended to a Boolean combination of word
equations.

%
\subsection{\bf Decidability Theorem}

We prove a set of lemmas culminating in the decidability theorem.

\begin{lemma}\label{lm:genR}
  If a word equation $w$ has a solved form $\S$, then there exists a
  set $\R$ of linear length constraints implied by $w$ that is
  finite and complete.  Moreover, there is an algorithm which, given
  $w$, computes this set $\R$ of constraints.
\end{lemma}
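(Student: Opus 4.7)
The plan is to extract $\R$ from the solved form $\S = \{X_j = t_j\}_{j=1}^n$ by taking lengths of both sides of each equation and collecting the natural non-negativity constraints on the auxiliary symbols that appear in $\S$. By the definition of solved form, each right-hand side $t_j$ is a concatenation of finitely many string constants, finitely many unfixed parts, and finitely many ``powered'' subterms of the form $u^{c \cdot i}$ with $c$ an integer constant and $i$ an integer parameter (the linearity clause in the definition of solved form is what makes this a linear dependence on $i$). Applying $len$ to the equation $X_j = t_j$ therefore yields a linear equation of the form
\[
len(X_j) \;=\; K_j + \sum_k a_{jk}\, i_k + \sum_\ell len(v_{j\ell}),
\]
where $K_j$ is the total length of the constants in $t_j$, the $i_k$ are the integer parameters occurring in $t_j$, and the $v_{j\ell}$ are the unfixed parts occurring in $t_j$. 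Augmenting the finite set of all such equations with the inequalities $i_k \geq 0$ and $len(v_{j\ell}) \geq 0$ gives a finite set $\R'$ of linear constraints over the length variables of $w$ together with the integer parameters and unfixed-part lengths appearing in $\S$.

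Next, since $\R'$ is already a Presburger formula, I would existentially quantify the auxiliary variables (the integer parameters and unfixed-part lengths) and apply Presburger quantifier elimination to obtain an equivalent finite set $\R$ of linear constraints over only the length variables $len(X_1), \ldots, len(X_n)$. Both steps---reading off $\R'$ syntactically from $\S$, and running Presburger QE---are effective procedures, so the construction of $\R$ from $w$ is algorithmic as required.

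The ``implied by $w$'' direction is then immediate: any satisfying assignment of $w$ is equivalent to an assignment of $\S$, which pins down concrete values of every integer parameter and concrete strings for every unfixed part, and the induced length profile satisfies $\R'$ and hence its projection $\R$. The main obstacle is the completeness direction, and the approach would be to run the construction in reverse. Given any length tuple that satisfies $\R$, the QE step guarantees integer witnesses for the auxiliary variables that make all of $\R'$ simultaneously true; then, because $\Sigma$ is nonempty, one can pick an arbitrary string of the prescribed length for each unfixed part---using the \emph{same} string whenever the same unfixed part appears in multiple equations of $\S$---and substitute these together with the witness values into $\S$. The resulting assignment is an honest solution of $\S$, hence of $w$, whose length profile is exactly the given tuple. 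This shows that the set of length tuples realized by solutions of $w$ coincides with the set of tuples satisfying $\R$, from which completeness (any length constraint implied by $w$ is implied by $\R$) follows at once.
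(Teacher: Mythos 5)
Your construction of the raw constraint set (take $len$ of both sides of each equation $X_j = t_j$ in $\S$, yielding $len(X_j) = K_j + \sum_k a_{jk} i_k + \sum_\ell len(v_{j\ell})$, together with $i_k \geq 0$ and $len(v_{j\ell}) \geq 0$) is exactly the paper's construction of $\R$, and your soundness argument matches as well. Where you diverge is in the second half: the paper simply takes $\R$ to be this raw set, auxiliary variables and all, and disposes of completeness in two sentences by asserting that every length constraint implied by $\S$ ``is of the form included in $\R$.'' You instead existentially quantify the integer parameters and unfixed-part lengths, run Presburger quantifier elimination, and then prove completeness by showing that the projected set defines \emph{exactly} the set of length profiles realized by solutions of $w$ (using the back-substitution of arbitrary strings of prescribed lengths, which the paper defers to Lemma~\ref{lm:equiSAT}). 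Your route is more rigorous---the paper's completeness claim is really an assertion, and its $\R$ contains constraints such as $i_r \geq 0$ that mention variables not occurring in $w$ at all, which sits awkwardly with the paper's own definition of ``implied by a word equation.'' The one cost of your approach is that Presburger quantifier elimination can introduce divisibility (congruence) predicates---e.g.\ eliminating $i$ from $len(X) = 2i+1 \wedge i \geq 0$ leaves $len(X) \geq 1$ together with a parity condition---so your projected $\R$ may not consist of \emph{linear length constraints} in the narrow syntactic sense of $A_{length}$ in Figure~\ref{fig:syntax}. This does not harm the downstream use in Theorem~\ref{th:Lel0}, since satisfiability of Presburger formulas with congruences is still decidable, but it is worth flagging that your $\R$ lives in a slightly larger class than the lemma statement advertises, whereas the paper's $\R$ stays linear at the price of carrying the auxiliary variables along.
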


\begin{proof} Since a word equation $w$ is
  logically equivalent to its solved form $\S$, every solution
  to $w$ is a solution to $\S$ and vice-versa. Hence, the set of
  length constraints implied by $w$ is equivalent to the set of length
  constraints implied by $\S$. 
  In $\R$, we will have integer variables associated with each string variable
  in $w$, integer variables associated with each unfixed part appearing in the 
  RHS of an equation in $\S$, and integer variables associated with each integer parameter
  appearing in the RHS of an equation in $\S$.
  For each $X$ appearing in $w$, consider the equation in $\S$ whose LHS is $X$:
  $X = t_1 \cdots t_n$, where each $t_i$ is either (1) a constant from $w$, 
  (2) a constant from $w$ raised to some integer parameter, or (3) an unfixed part.
  This equation implies a length equation of the form:
  $len(X) = C + i_1 c_1 + \cdots + i_k c_k + len(y_1) + \cdots len(y_j)$, 
  where $C$ is the sum of the lengths of constants in $w$ that appear on the RHS
  without an integer parameter; the $c_i$ terms are the lengths of constants
  with integer parameters; and there are terms for each unfixed part appearing in the 
  equation.  The only other length constraints associated with this equation 
  say that the unfixed parts and the integer parameters may be arbitrarily chosen: $i_r \geq 0$, $len(y_s) \geq 0$
  for each $1 \leq r \leq k$ and $1 \leq 1 \leq s \leq j$.
  Note that the minimum length of $X$ is the expression above 
  where we choose each $i_r = 0$ and each $len(y_s) = 0$.
  Let $\R$ be the union over $X$ in $w$ of the (finitely many) length constraints
  associated with $X$ discussed above.
  Since $\S$ is finite, so is $\R$.

  It remains to prove that $\R$ is complete. By definition of solved form, 
  all length constraints implied by $\S$ are of the form included in $\R$.
  Thus, $\R$ is complete for $\S$.
  Since $\S$ is logically equivalent with $w$, they 
  imply the same length constraints.  Hence, $\R$ is complete for $w$ as well.
\end{proof}

\begin{lemma}\label{lm:equiSAT}
  If a word equation $w$ has a solved form $\S$, then $w$ is
  equi-satisfiable with the length constraints $\R$ derived from $\S$.
\end{lemma}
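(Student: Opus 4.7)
The plan is a two-direction argument establishing equi-satisfiability.

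For the easy direction (satisfiability of $w$ implies satisfiability of $\R$): take any satisfying assignment $\sigma$ for $w$. Since $\S$ is logically equivalent to $w$, $\sigma$ also satisfies $\S$, which means for each $X$ in $w$, the equation $X = t_1\cdots t_n$ holds with some specific choice of integer-parameter values and some specific strings assigned to the unfixed parts. Reading off $len(\sigma(X))$, the chosen parameter values, and the lengths of the substrings assigned to unfixed parts gives integer values that satisfy every constraint constructed in the proof of Lemma~\ref{lm:genR}, so $\R$ is satisfiable.

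The harder (and main) direction is lifting a satisfying integer assignment for $\R$ back to a satisfying string assignment for $w$. Given an integer assignment $\mu$ satisfying $\R$, I would proceed in three steps. First, for each integer parameter $i_r$ in the solved form, set it to $\mu(i_r) \geq 0$. Second, for each unfixed part $y_s$ appearing on the RHS of some equation in $\S$, choose an arbitrary string of length $\mu(len(y_s)) \geq 0$ over the alphabet $\Sigma$ (for concreteness, $y_s := a^{\mu(len(y_s))}$ for some fixed letter $a \in \Sigma$). Third, define, for each string variable $X$ in $w$, the assignment $\sigma(X)$ to be the string obtained by evaluating the RHS $t_1\cdots t_n$ of the equation for $X$ in $\S$ under these choices. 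This is well-defined because, by the second bullet of the definition of solved form, each variable $X$ appears exactly once on a LHS and never on any RHS, so there is no circularity and no ambiguity.

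It remains to verify that $\sigma$ satisfies $w$. Two things must be checked: (i) the string lengths match the integer values, i.e., $len(\sigma(X)) = \mu(len(X))$ for every $X$, and (ii) $\sigma$ satisfies the original equation $w$. Point (i) follows by construction, since the length-formula for $X$ in the proof of Lemma~\ref{lm:genR} was precisely the arithmetic image of the concatenation $t_1\cdots t_n$, and $\mu$ satisfies that formula. Point (ii) follows because $\sigma$ together with the chosen parameter and unfixed-part values witnesses $\wedge\S$, and $\S$ is logically equivalent to $w$ by the definition of solved form; hence $\sigma \models w$.

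The only real obstacle is the well-definedness of step three and the consistency check in (i): we must rely on the structural properties of solved form (single LHS occurrence, no RHS occurrence, linear integer parameters, RHS built from constants/parameters/unfixed parts only) to guarantee that the integer arithmetic in $\R$ faithfully mirrors string-length arithmetic, and that independent choices for different unfixed parts cannot conflict. Both follow directly from the definitions, so once these are unpacked the argument is complete.
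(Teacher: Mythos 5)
Your proposal is correct and follows essentially the same route as the paper's proof: the forward direction reads off lengths from a satisfying string assignment (using that $\R$ is implied by $w$), and the backward direction lifts an integer solution of $\R$ to a string assignment by fixing the integer parameters and choosing arbitrary strings of the prescribed lengths for the unfixed parts, then invoking the logical equivalence of $\S$ and $w$. You simply spell out the well-definedness and length-consistency checks that the paper leaves implicit.
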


\begin{proof} Since $\R$ is finite, the conjunction of all its elements 
is a formula of $\Lel{0}$ 

  ($\Rightarrow$) If $w$ is satisfiable, then so is $\R$: 
  Suppose $w$ is satisfiable and consider some satisfying assignment $w$.
  Then since $\R$ is implied by $w$, the lengths of the strings in this
  assignment satisfy all the constraints in $\R$.  Thus, this set of lengths
  witnesses the satisfiability of $\R$.

  ($\Leftarrow$) If $\R$ is satisfiable, then so is $w$: 
  Suppose $\R$ is satisfiable. Any solution of $\R$ gives 
  a collection of lengths for the variables in $w$.  An assignment 
  that satisfies $w$ is given by choosing arbitrary strings of the prescribed
  length for the unfixed parts and choosing values of the integer
  parameters prescribed by the solution of $\R$.
\end{proof}

\begin{thm}\label{th:Lel0}
  The satisifiability problem for $\Lel{0}$ formulas is decidable,
  provided that there is an algorithm to obtain the solved forms of
  word equations for which they exist.
\end{thm}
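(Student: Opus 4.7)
The plan is to reduce satisfiability of an arbitrary $\Lel{0}$ formula $\psi$ to a satisfiability question in Presburger arithmetic, which is decidable by~\cite{PRES27}. First I would put $\psi$ into disjunctive normal form $\bigvee_i D_i$. Since $\psi$ is satisfiable iff some $D_i$ is, it suffices to decide satisfiability of each conjunction $D_i$. Within each $D_i$, separate the atomic formulas into a word-equation part $\phi_i$ and a length-constraint part $\theta_i$. Length atoms of the form $t_{len} \leq c$ and their negations $t_{len} > c$ are all linear inequalities over $\mathbb{N}$, so $\theta_i$ is already a conjunction of linear integer constraints.

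Next, for each conjunct $\phi_i$ of word equations, invoke the hypothesized algorithm to produce the effective solved form $\wedge \S_i$ of each equation (which halts by assumption whenever a solved form exists). Apply Lemma~\ref{lm:genR} to $\phi_i$ to algorithmically obtain a finite, complete set $\R_i$ of linear length constraints implied by $\phi_i$. The combined formula $\R_i \wedge \theta_i$ is then a quantifier-free formula of Presburger arithmetic, whose satisfiability is decidable. By Lemma~\ref{lm:equiSAT}, $\phi_i$ is equisatisfiable with $\R_i$; more importantly, any integer solution of $\R_i \wedge \theta_i$ can be back-translated to a string assignment satisfying $\phi_i \wedge \theta_i$ by instantiating unfixed parts with arbitrary strings of the prescribed lengths and setting integer parameters accordingly. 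Hence $D_i$ is satisfiable iff $\R_i \wedge \theta_i$ is satisfiable in Presburger arithmetic, and $\psi$ is satisfiable iff at least one $D_i$ is. Composing these decidable subproblems yields a sound, complete, terminating procedure.

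The step I expect to be the main obstacle is the treatment of negated word equations that arise when pushing negations through in the reduction to DNF: the solved form machinery of Lemmas~\ref{lm:genR} and~\ref{lm:equiSAT} is stated only for positive word equations. To handle a disequation $s \neq t$, I would encode the statement ``the strings $s$ and $t$ differ in some position'' as a finite disjunction of positive word equations over fresh unfixed parts (splitting on which character of the alphabet $\Sigma$ occurs at the first position of disagreement, and on the relative lengths of $s$ and $t$); each resulting disjunct is then a system of positive equations to which the above procedure applies. A secondary concern is ensuring that the integer parameters and unfixed-part length variables introduced by Lemma~\ref{lm:genR} are kept distinct across different equations in $\phi_i$ so that shared variables among the original equations give rise to shared length variables in $\R_i$ (otherwise the implied relationships between variables would be lost); this is a bookkeeping matter but is essential for soundness of the final Presburger query.
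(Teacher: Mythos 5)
Your core route---compute a solved form, apply Lemma~\ref{lm:genR} to extract a finite complete set of implied linear length constraints, then use Lemma~\ref{lm:equiSAT} and back-solving to reduce everything to a quantifier-free Presburger query---is exactly the paper's. Your explicit DNF decomposition and the encoding of disequations $s \neq t$ as finite disjunctions of positive equations over fresh variables are details the paper waves off as ``straightforward,'' and your sketch of them is reasonable (modulo the caveat that the new equations so introduced must themselves admit solved forms for the conditional hypothesis to apply).

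The genuine gap is in how you apply the solved-form machinery to a conjunct $\phi_i$ containing \emph{several} word equations. Your phrasing (``the effective solved form of each equation,'' and the closing claim that keeping length variables shared ``across different equations in $\phi_i$'' is mere bookkeeping) indicates you would compute a solved form and a set of implied length constraints per equation and then conjoin the results. That is unsound even with perfect sharing of the length variables: the sets $\R$ record only length information, so two equations can each be satisfiable with mutually consistent implied length constraints while their conjunction is unsatisfiable for reasons invisible to lengths. Concretely, $X = aY \wedge X = Yb$ has per-equation solved forms $X = ay_1,\ Y = y_1$ and $X = y_2b,\ Y = y_2$, whose combined length constraints (essentially $len(X) = 1 + len(Y)$ and $len(Y) \geq 0$, twice over) are satisfied by $len(X) = 1,\ len(Y) = 0$; yet the system forces $aY = Yb$, which has no solution when $a \neq b$. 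Lemmas~\ref{lm:genR} and~\ref{lm:equiSAT} are only valid relative to a single solved form $\S$ that is logically equivalent to the \emph{entire} system in the conjunct, with each variable on the LHS of exactly one equation of $\S$; completeness of $\R$, and hence the correctness of the back-solving step, depends on that. The paper sidesteps the issue by assuming without loss of generality that each conjunct contains a single word equation (a finite system can be folded into one equation by a standard encoding) and requiring \emph{that} equation to have an effective solved form. You need to do the same, or else demand that the hypothesized algorithm return one solved form for the whole system $\phi_i$ at once.
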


\begin{proof} 
  We assume without loss of generality that 
  the given $\Lel{0}$ formula is the conjunction of a single 
  word equation with some number of length constraints.  
  (Generalizing to arbitrary $\Lel{0}$ formulas is straightforward.)
  Let the input to the algorithm be a formula $\phi
  \wedge \theta$, where $\phi$ is the word equation and $\theta$ is a
  conjunction of length constraints. The output of the algorithm is
  {\it satisfiable} (SAT) or {\it unsatisfiable} (UNSAT).

  Plandowski's algorithm \cite{plandowski2006} decides 
  satisfiability of word equations; 
  known algorithms for formulas of Presburger arithmetic
  can decide the satisfiability of systems of linear length constraints.
  Thus, begin by running these algorithms (in parallel) 
  to decide if (separately) $\phi$ and $\theta$ are satisfiable.  
  If either of these return UNSAT, we
  return UNSAT.

  Using the assumption that the word equation $\phi$ has an
  effective solved form, compute this form $\S$ and the
  associated (complete and finite) implied set $\R$ of linear length
  constraints (as in Lemma \ref{lm:genR}).  By Lemma \ref{lm:equiSAT}, 
  it is now sufficient to check the satisfiability of $(\wedge \R) \wedge \theta$.
  This can be done by a second application of an algorithm 
  for formulas in Presburger arithmetic, because the length constraints
  implied by $\phi$ are all linear.  If this system of linear inequalities
  is satisfiable, return SAT, otherwise, we return UNSAT.

  This procedure is a sound, complete and terminating procedure for
  $\Lel{0}$-formulas whose word equations have effective solved
  forms.
\end{proof}

\subsection{\bf Practical Value of Solved Form and the Decidability
  Result}

JavaScript programs often process strings. These strings are entered into
 input forms on web-pages or are substrings used by JavaScript
programs to dynamically generate web-pages or SQL queries. 
During the processing of these
strings, JavaScript programs often concatenate these strings to form
larger strings, use strings in assignments, compare string lengths, construct equalities between strings
as part of if-conditionals or use regular expressions
as basic ``sanity-checks" of the strings being processed. Hence, any program
analysis of such JavaScript programs results in formulas that contain
string constants and variables, the concatenation operation, regular
expressions, word equations, and uses of the length function.

In their paper on an automatic JavaScript testing
program (Kudzu) and a practical satisfiability procedure for strings~\cite{prateek},
Saxena et al. mention generating more than 50,000 $\Lelr{0}$ formulas
where the length of the string variables is bounded (i.e., the string
variables range over a finite universe of strings). Kudzu takes as
input a JavaScript program and (implicit) specification, and does some
automatic analysis (a form of concrete and symbolic
execution~\cite{EXE,DART}) on the input program. The result of the
analysis is a string formula that captures the behavior of the
program-under-test in terms of the symbolic input to this program. A
solution of such a formula is a test input to the
program-under-test. Kudzu uses the Kaluza string solver to solve these
formulas and generate program inputs for program testing.

%

We obtained more than 50,000 string constraints (word equations +
length constraints) from the Kaluza team
(http://webblaze.cs.berkeley.edu/2010/kaluza/). Kaluza is a solver for
string constraints, where these constraints are obtained from
bug-finding and string analysis of web applications. The constraints
are divided into satisfiable and unsatisfiable constraints. We wrote a
simple Perl script to count the number of equations per file and the
number of equations already in solved form (identifier =
expression). We then computed the ratio to see how many examples from
this actual data set are already in solved form.

\subsubsection*{Experimental Results}

The results are divided into groups based on whether the 
constraints were satisfiable or not.  For satisfiable small
equations
(approximately 30-50 constraints per file), about $80 \%$ were 
already in solved form.  For satisfiable large equations (around 200 
constraints per file), 
this number rose to approximately $87\%$. 
Among the unsatisfiable and small equations (less than 20 constraints per file), 
again about $80\%$ were already in solved form. 
Large (greater than $4000$ constraints) unsatisfiable equations
were in solved form a slightly smaller percentage of the time: $75\%$.

\section{\bf Word Equations, Length, and Regular Expressions}
\label{sec:elr-decidability}

We now consider whether the previous result can be extended to show
that the satisfiability problem for $\Lelr{0}$ formulas is decidable,
provided that there is an algorithm to obtain the solved forms of
given word equations. A generalization of the proof strategy from
above looks promising.  That is, given a membership test in a regular
set $X \in RE$, we can extract from the structure of the regular
expression a constraint on the length of $X$ that is expressible as a
linear inequality.
 Thus, it  may seem that the same machinery as in the
$\Lel{0}$ theorem may be applied to the broader context of $\Lelr{0}$.
However, there remain some subtleties to resolve.  

\begin{examp}  Consider the $\Lelr{0}$ formula
\[
abX = Xba ~\wedge~X \in (ab)^*b ~\wedge~\text{len}(X) \leq 3.
\]
A na\"ive translation of each component into length constraints gives
us the following:
\[
\begin{cases}
  \text{len}(X) = 2i+1, i \geq 0 \qquad\text{\it implied by the word equation and regular expression} \\
  \text{len}(X) \leq 3.
\end{cases}
\]
This system of length constraints is easily seen to be simultaneously
satisfiable: let $i = 0$ or $1$ and hence $\text{len}(X) = 1$ or
$3$. However, the formula is {\bf not} satisfiable since solutions of the word
equation are $X \in (ab)^*a$ and the regular expression
requires any solution to end in a $b$.
\end{examp}

 Thus, in order to address
$\Lelr{0}$ formulas, we must take into account more information than
is encapsulated by the length constraints imposed by regular
expressions. In particular, if we impose the additional restriction
that the word equations must have solved form (without unfixed parts)
that are also regular expressions, then we can get a decidability
result for $\Lelr{0}$ formulas.

\begin{lemma}
\label{lem:re-lem}
If a word equation has a solved form without unfixed parts that is also a 
regular expression, then there is a finite set of linear length constraints
that can be effectively computed from this solved form and
which are equisatisfiable with the equation.
\end{lemma}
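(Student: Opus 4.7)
The plan is to proceed in close analogy with Lemmas \ref{lm:genR} and \ref{lm:equiSAT}, exploiting the simplification that the absence of unfixed parts gives a tight (functional) correspondence between integer parameter values and word solutions.

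First, I would construct the set $\R$ of linear length constraints from the solved form $\S$ of the given word equation $w$, following the recipe of Lemma \ref{lm:genR}. For each equation $X = t_X$ in $\S$, where $t_X$ is a concatenation of constants and of constants raised to linear integer parameters (and, by hypothesis, no unfixed parts), extract a length equation of the form $len(X) = C_X + \sum_{r} c_r i_r$, where $C_X$ is the sum of the lengths of the plain constant factors and each $c_r$ is the length of the constant raised to parameter $i_r$. Together with $i_r \geq 0$ for each $i_r$ appearing, this yields the finite linear system $\R$, effectively computable from $\S$. Note that, in contrast to the general setting of Lemma \ref{lm:genR}, no auxiliary length variables for unfixed parts appear: the only variables in $\R$ are the $len(X)$ for string variables $X$ in $w$ and the integer parameters.

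Next, I would establish equisatisfiability of $w$ and $\R$. The forward direction is immediate: any satisfying assignment for $w$ induces, through the equivalence of $w$ and $\S$, a choice of integer parameters, and the resulting string lengths together with these parameter values satisfy $\R$ by construction. The backward direction is where the absence of unfixed parts does the real work. Given any solution of $\R$, I would read off the values of the integer parameters $i_r$ directly from the solution; substituting these values into each equation $X = t_X$ in $\S$ determines a unique string for every variable $X$, because the right-hand sides contain no unfixed parts to choose. This assignment satisfies $\S$ and hence $w$.

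The main obstacle, and the place where the regular-expression hypothesis on the solved form pays off, is the backward direction. In Lemma \ref{lm:equiSAT} one had the freedom to pick arbitrary strings of the prescribed length for the unfixed parts, but one had to verify that this freedom sufficed; here that freedom is both absent and unnecessary, so the argument goes through only provided that the integer parameters in $\S$ can be chosen independently and that each nonnegative assignment yields a well-formed word. The assumption that $\S$ is a regular expression guarantees exactly this: parameters correspond to independent Kleene-star exponents, with no hidden coupling (as one would see, e.g., in a non-regular pattern like $(ab)^i a^i$) that could make an integer solution of $\R$ fail to lift to a word solution of $w$.
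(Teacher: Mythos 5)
Your proof is correct, but it takes a genuinely different route from the paper's. The paper disposes of the lemma in two sentences by citing the classical fact (from \cite{Bl99}) that the set of lengths of strings in a regular set is a finite union of arithmetic progressions whose parameters can be effectively extracted from the regular expression; each progression is then a linear constraint, and equisatisfiability holds because every length in the set is witnessed by an actual string of the regular set. You instead ignore the regular-expression structure entirely and rederive the constraints directly from the parametric solved form, mirroring Lemmas \ref{lm:genR} and \ref{lm:equiSAT} with the simplification that the absence of unfixed parts makes the lift from an integer solution back to a word solution canonical. Your approach buys something the paper's one-liner arguably loses: by keeping one integer variable per integer parameter of $\S$, you preserve the coupling between the lengths of different string variables that share a parameter (as in the paper's example $X = a^i \wedge Y = a^i$), whereas computing the length set of each variable's regular set separately forgets these cross-variable dependencies; this matters when the constraints are later conjoined with explicit length constraints in the theorem. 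The paper's approach buys brevity and reuses a standard, citable result.

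One caveat on your closing paragraph: the work you attribute to the regular-expression hypothesis is actually already done by the definition of solved form. Since $\S$ is required to be \emph{logically equivalent} to $w$, every nonnegative instantiation of the integer parameters yields a genuine solution of $w$, whether or not the exponents are ``coupled''; and even a coupled pattern such as $(ab)^i a^i$ (which is a legal solved-form RHS but not a regular expression) still induces a linear length equation and lifts without difficulty. So your backward direction goes through without the regularity assumption, and the hypothesis that the solved form is a regular expression is really needed only in the subsequent theorem, where the solved-form sets must be intersected with the given $A_{regexp}$ constraints. This is an inessential misattribution, not a gap.
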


\begin{proof}
It is sufficient to recall the fact, from~\cite{Bl99}, that given a regular set $R$, 
the set of lengths of strings in $R$ is a
  finite union of arithmetic progressions.  Moreover, there is an
  algorithm to extract the parameters of these arithmetic progressions
  from the regular expression defining $R$.
\end{proof}

Using the above Lemma, the set of length constraints implied by an
arbitrary regular expression can be expressed as a finite system of
linear inequalities. 

\begin{thm}
  The satisifiability problem for $\Lelr{0}$ formulas is decidable,
  provided that there is an algorithm to obtain the solved forms of
  the given word equations, and the solved form equations do not
  contain unfixed parts and are regular expressions.
\end{thm}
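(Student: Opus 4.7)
The plan is to extend the proof of Theorem \ref{th:Lel0} by adding one step that translates each regex membership into a finite system of linear length constraints via Lemma \ref{lem:re-lem}, and then invoking the decidability of Presburger arithmetic exactly as before. As in the earlier theorem, I would first reduce (by DNF) to the case of a single conjunction $\phi \wedge \theta \wedge \bigwedge_{j} (X_j \in RE_j)$, where $\phi$ is the word equation, $\theta$ is a conjunction of length constraints, and the remaining conjuncts are the regex memberships.

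I would then proceed in three steps. First, run the assumed solved-form algorithm on $\phi$ to produce $\S$; by hypothesis each equation $X = t_X \in \S$ has no unfixed parts and its right-hand side $t_X$ is a regular expression, so the set of values $X$ may take is precisely the regular language $R_X := L(t_X)$. Extract from $\S$ the finite complete system $\R_\S$ of linear length constraints implied by $\S$, as in Lemma \ref{lm:genR}. Second, for each explicit membership $X_j \in RE_j$, form the regular language $R_j' := R_{X_j} \cap RE_j$ (computable via a standard product-automaton construction, since $R_{X_j}$ is itself given by a regular expression) and apply Lemma \ref{lem:re-lem} to $R_j'$ to obtain a finite set of linear length constraints characterizing the lengths realized by elements of $R_j'$. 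Third, take the conjunction of $\R_\S$, these per-variable regex-derived constraints, and $\theta$; the result is a Presburger-arithmetic formula over integer variables for the lengths and the solved-form parameters, whose satisfiability is decidable~\cite{PRES27}. Output SAT exactly when this Presburger formula is satisfiable.

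Correctness in the forward direction is immediate: any satisfying assignment of the original formula induces length and parameter values that satisfy every extracted linear constraint, by construction. In the backward direction, a satisfying Presburger assignment provides values for the solved-form integer parameters, which, since there are no unfixed parts, determine a concrete string for each variable; Lemma \ref{lem:re-lem} then guarantees that the length chosen for each $X_j$ is realized by some element of $R_j'$, and hence that the string determined by the solved form can be taken to lie in $R_j'$, satisfying both the solved-form equation and the assertion $X_j \in RE_j$.

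The main obstacle is the equisatisfiability in the backward direction when the solved form couples distinct string variables through shared integer parameters: intersecting regex constraints on a per-variable basis can in principle admit length tuples that are not jointly realizable by a single consistent parameter assignment. The resolution is to treat each Kleene-star exponent appearing in any $t_X$ as a Presburger variable shared across all solved-form equations that reference it, so that the arithmetic-progression constraints produced by Lemma \ref{lem:re-lem} constrain those shared parameters directly rather than merely bounding the individual lengths; since finite unions of arithmetic progressions are Presburger-definable, the coupling stays inside the Presburger fragment, and the overall procedure is sound, complete, and terminating.
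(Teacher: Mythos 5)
Your overall strategy coincides with the paper's: the paper's entire proof of this theorem is the single sentence that it is a ``straightforward extension'' of Theorem~\ref{th:Lel0}, using Lemma~\ref{lem:re-lem} to convert regular-expression information into linear length constraints and then deciding everything in Presburger arithmetic. You supply considerably more detail than the paper does, and your key move --- intersecting the solved-form language $R_{X_j}$ with $L(RE_j)$ \emph{before} extracting length constraints --- is exactly what is needed to avoid the failure mode of the paper's own example $abX = Xba \wedge X \in (ab)^*b \wedge len(X) \leq 3$, where the lengths extracted separately from the equation and the regular expression are jointly satisfiable even though the formula is not. To that extent your proposal is a faithful, and more careful, rendering of the intended argument.

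However, the final step of your proposal has a gap that Lemma~\ref{lem:re-lem} cannot close. When a Kleene-star exponent (integer parameter) is shared among several solved-form equations --- as in the paper's example $X = a^i \wedge Y = a^i$ --- you propose to let the arithmetic-progression constraints produced by Lemma~\ref{lem:re-lem} ``constrain those shared parameters directly.'' But that lemma only returns the set of \emph{lengths} of strings in a regular language as a finite union of arithmetic progressions; it says nothing about which exponent tuples $(i_1,\dots,i_k)$ of $t_X = u_1^{i_1}\cdots u_k^{i_k}$ realize membership in $R'_j$. Since distinct exponent tuples can yield the same length (e.g., $t_X = a^i b^j$ has $len(X)=i+j$), a one-dimensional description of achievable lengths cannot recover the admissible parameter tuples, and the backward direction of equisatisfiability can fail for coupled variables. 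To repair this you need the stronger (true, and effectively computable) fact that $\{(i_1,\dots,i_k) : u_1^{i_1}\cdots u_k^{i_k} \in R\}$ is a semilinear set for any regular $R$ --- obtainable by eventual-periodicity analysis of the maps $q \mapsto \delta(q,u_j^{i})$ in a product automaton, or via Parikh's theorem --- and then conjoin that Presburger formula over the shared parameters. Alternatively, one can read the theorem's hypothesis that each solved-form right-hand side ``is a regular expression'' as forbidding cross-variable parameter sharing altogether, in which case your per-variable intersection argument already suffices; but under the reading you adopt, the appeal to Lemma~\ref{lem:re-lem} alone does not justify the coupling step.
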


The proof is a straightforward extension of the conditional
decidability proof given in Section~\ref{sec:decidability}.




  

\section{Related Work}
\label{sec:relwork}

In his original 1946 paper, Quine \cite{Quine} showed that the
first-order theory of string equations (i.e., quantified sentences
over Boolean combination of word equations) is undecidable.  Due to
the expressibility of many key reliability and verification questions
within this theory, this work has been extended in many ways.

One line of research studies fragments and modifications of this base
theory which are decidable.  Notably, in 1977, Makanin proved that the
satisfiability problem for the quantifier-free theory of word
equations is decidable \cite{makanin}.  In a sequence of papers,
Plandowski and co-authors showed that the complexity of this problem
is in PSPACE \cite{plandowski2006}.  Stronger results have been found
where equations are restricted to those where each variable occurs at
most twice\cite{robsondiekert} or in which there are at most two
variables \cite{CharaPach, IliePland, dabrowski2002weo}. In the first
case, satisfiability is shown to be NP-hard; in the second, polynomial
(which was improved further in the case of single variable word
equations).

Concurrently, many researchers have looked for the exact boundary
between decidability and undecidability. Durnev \cite{durnev} and
Marchenkov \cite{marchenkov} both showed that the $\forall\exists$
sentences over word equations is undecidable. Note that Durnev's
result is closest to our undecidability result. The main difference is
that our proof is considerably simpler because of the use of
two-counter machines, as opposed to certain non-standard machines used
by Durnev. We also note corollaries regarding number of occurences of
a variable, and $\Lel{1}$ sentences with a single universal followed
by bounded existentials. On the other hand, Durnev uses only 4 string
variables to prove his result, while we use 7. We believe that we can
reduce the number of variables, at the expense of a more complicated
proof.

Word equations augmented with additional predicates yield richer
structures which are relevant to many applications.  In the 1970s,
Matiyasevich formulated a connection between string equations
augmented with integer coefficients whose integers are taken from the
Fibonacci sequence and Diophantine equations \cite{Matiyasevich}.  In
particular, he showed that proving undecidability for the
satisfiability problem of this theory would suffice to solve Hilbert's
10th Problem in a novel way. Schulz \cite{schulz} extended Makanin's
satisfiability algorithm to the class of formulas where each variable
in the equations is specified to lie in a given regular set.  This is
a strict generalization of the solution sets of word equations.
\cite{KarhumakiPM97} shows that the class of sets expressible through
word equations is incomparable to that of regular sets.

M\"oller~\cite{Moller} studies word equations and related theories as
motivated by questions from hardware verification. More specifically,
M\"oller proves the undecidability of the existential fragment of a
theory of fixed-length bit-vectors, with a special finite but possibly
arbitrary concatenation operation, the extraction of substrings and
the equality predicate. Although this theory is related to the word
equations that we study, it is more powerful because of the
finite but possibly arbitrary concatenation.



\bibliographystyle{plain}
\bibliography{strings}

\begin{thebibliography}{10}

\bibitem{Bl99}
Achim Blumensath.
\newblock Automatic structures.
\newblock {Diploma thesis, RWTH-Aachen}, 1999.

\bibitem{EXE}
C.~Cadar, V.~Ganesh, P.M. Pawlowski, D.L. Dill, and D.R. Engler.
\newblock {EXE}: automatically generating inputs of death.
\newblock In Ari Juels, Rebecca~N. Wright, and Sabrina De~Capitani
  di~Vimercati, editors, {\em {ACM} Conference on Computer and Communications
  Security}, pages 322--335. ACM, 2006.

\bibitem{CharaPach}
W.~Charatonik and L.~Pacholski.
\newblock Word equations with two variables.
\newblock In H.~Abdulrab and J.-P. P{\'e}cuchet, editors, {\em IWWERT}, volume
  677 of {\em Lecture Notes in Computer Science}, pages 43--56. Springer, 1991.

\bibitem{dabrowski2002weo}
R.~Dabrowski and W.~Plandowski.
\newblock On word equations in one variable.
\newblock {\em Algorithmica}, 60(4):819--828, 2011.

\bibitem{durnev}
V.~Durnev.
\newblock Undecidability of the positive $\forall\exists^{3}$-theory of a free
  semigroup.
\newblock {\em Siberian Mathematical Journal}, 36(5):1067--1080, 1995.

\bibitem{EbFlumThomas}
H.-D. Ebbinghaus, J.~Flum, and W.~Thomas.
\newblock {\em Mathematical Logic}.
\newblock Undergraduate Texts in Mathematics. Springer-Verlag, 1994.

\bibitem{emmiMS2007}
M.~Emmi, R.~Majumdar, and K.~Sen.
\newblock Dynamic test input generation for database applications.
\newblock In D.S. Rosenblum and S.G. Elbaum, editors, {\em ISSTA}, pages
  151--162. ACM, 2007.

\bibitem{hampi}
V.~Ganesh, A.~Kiezun, S.~Artzi, P.J. Guo, P.~Hooimeijer, and M.D. Ernst.
\newblock {HAMPI}: A string solver for testing, analysis and vulnerability
  detection.
\newblock In G.~Gopalakrishnan and S.~Qadeer, editors, {\em CAV}, volume 6806
  of {\em Lecture Notes in Computer Science}, pages 1--19. Springer, 2011.

\bibitem{DART}
P.~Godefroid, N.~Klarlund, and K.~Sen.
\newblock {DART}: directed automated random testing.
\newblock In V.~Sarkar and M.W. Hall, editors, {\em PLDI}, pages 213--223. ACM,
  2005.

\bibitem{ullmanbook}
J.E. Hopcroft, R.~Motwani, and J.D. Ullman.
\newblock {\em Introduction to automata theory, languages, and computation}.
\newblock Pearson/Addison Wesley, 2007.

\bibitem{IliePland}
Lucian Ilie and Wojciech Plandowski.
\newblock Two-variable word equations.
\newblock {\em ITA}, 34(6):467--501, 2000.

\bibitem{KarhumakiPM97}
J.~Karhum{\"a}ki, F.~Mignosi, and W.~Plandowski.
\newblock The expressibility of languages and relations by word equations.
\newblock {\em J. ACM}, 47(3):483--505, 2000.

\bibitem{hampi2}
A.~Kiezun, V.~Ganesh, P.J. Guo, P.~Hooimeijer, and M.D. Ernst.
\newblock {HAMPI}: a solver for string constraints.
\newblock In G.~Rothermel and L.K. Dillon, editors, {\em ISSTA}, pages
  105--116. ACM, 2009.

\bibitem{rupak}
Rupak Majumdar.
\newblock Private correspondence.
\newblock SWS, MPI, Kaiserslautern, Germany, 2010.

\bibitem{makanin}
G.S. Makanin.
\newblock The problem of solvability of equations in a free semigroup.
\newblock {\em Math. Sbornik}, 103:147--236, 1977.
\newblock English transl. in Math USSR Sbornik 32 (1977).

\bibitem{marchenkov}
S.~S. Marchenkov.
\newblock Unsolvability of positive $\forall\exists$-theory of free semi-group.
\newblock {\em Sibirsky mathmatichesky jurnal}, 23(1):196--198, 1982.

\bibitem{Matiyasevich}
Yu. Matiyasevich.
\newblock Word equations, {Fibonacci} numbers, and {Hilbert}'s tenth problem.
\newblock Unpublished. Available at
  http://logic.pdmi.ras.ru/?yumat/Journal/jcontord.htm, 2006.

\bibitem{matiyasevich2008}
Yu. Matiyasevich.
\newblock Computation paradigms in light of {Hilbert's Tenth Problem}.
\newblock In S.B. Cooper, B.~L\"{o}we, and A.~Sorbi, editors, {\em New
  Computational Paradigms}, pages 59--85. Springer New York, 2008.

\bibitem{Moller}
Oliver M{\"o}ller.
\newblock $\exists {BV}_{[n]} solvability$.
\newblock Unpublished Manuscript. SRI International, Menlo Park, CA, USA,
  October 1996.

\bibitem{plandowski99}
W.~Plandowski.
\newblock Satisfiability of word equations with constants is in {PSPACE}.
\newblock In {\em FOCS}, pages 495--500. IEEE Computer Society, 1999.

\bibitem{plandowski2006}
W.~Plandowski.
\newblock An efficient algorithm for solving word equations.
\newblock In J.M. Kleinberg, editor, {\em STOC}, pages 467--476. ACM, 2006.

\bibitem{PRES27}
M.~Presburger.
\newblock {\"U}ber de vollst{\"a}ndigkeit eines gewissen systems der arithmetik
  ganzer zahlen, in welchen, die addition als einzige operation hervortritt.
\newblock In {\em Comptes Rendus du Premier Congr{\`e}s des Math{\'e}maticienes
  des Pays Slaves}, pages 92--101, 395, Warsaw, 1927.

\bibitem{Quine}
W.~V. Quine.
\newblock Concatenation as a basis for arithmetic.
\newblock {\em The Journal of Symbolic Logic}, 11(4):105--114, 1946.

\bibitem{robsondiekert}
J.M. Robson and V.~Diekert.
\newblock On quadratic word equations.
\newblock In C.~Meinel and S.~Tison, editors, {\em STACS}, volume 1563 of {\em
  Lecture Notes in Computer Science}, pages 217--226. Springer, 1999.

\bibitem{prateek}
P.~Saxena, D.~Akhawe, S.~Hanna, F.~Mao, S.~McCamant, and D.~Song.
\newblock A symbolic execution framework for {JavaScript}.
\newblock In {\em {IEEE} Symposium on Security and Privacy}, pages 513--528.
  {IEEE} Computer Society, 2010.

\bibitem{schulz}
K.~Schulz.
\newblock Makanin's algorithm for word equations-two improvements and a
  generalization.
\newblock In K.~Schulz, editor, {\em Word Equations and Related Topics}, volume
  572 of {\em Lecture Notes in Computer Science}, pages 85--150. Springer
  Berlin / Heidelberg, 1992.

\bibitem{WassermannSu2007}
G.~Wassermann and Z.~Su.
\newblock Sound and precise analysis of web applications for injection
  vulnerabilities.
\newblock In J.~Ferrante and K.S. McKinley, editors, {\em PLDI}, pages 32--41.
  ACM, 2007.

\end{thebibliography}
\end{document}